\documentclass[conference, 9pt]{IEEEtran}
\usepackage[utf8]{inputenc}

\usepackage{amsmath,amssymb, amsthm, mathtools, braket}
\usepackage{braket}
\usepackage[pdftex]{graphicx}
\usepackage{caption,subcaption} % depreciated, but still used in  IEEEtran
\usepackage{enumitem}
\usepackage{algorithm, algorithmic}

\usepackage{geometry}
\geometry{
 letterpaper, %showframe,
 left=0.75in,
 right=0.75in,
 top=0.75in,
 bottom=1.in,
}
\setlength{\textwidth}{6.98in}
\setlength{\textheight}{9.in}

%\linespread{1.01} 

%\newcommand{\mbb}[1]{\boldsymbol{#1}}

\setlength\unitlength{1mm}
\usepackage{bm}

\long\def\comment#1{}

% bb font symbols

\newfont{\bbb}{msbm10 scaled 700}

%\newfont{\bb}{msbm10 scaled 1100}

%\newcommand{\RR}{\mbox{\bb R}}

%\newcommand{\ZZ}{\mbox{\bb Z}}

%\newcommand{\NN}{\mbox{\bb N}}

% Vectors

\newcommand{\bv}{{\bf b}}

\newcommand{\ev}{{\bf e}}
\newcommand{\fv}{{\bf f}}

\newcommand{\tv}{{\bf t}}

\newcommand{\xv}{{\bf x}}
\newcommand{\yv}{{\bf y}}

\newcommand{\zerov}{{\bf 0}}

% Matrices

\newcommand{\Am}{{\bf A}}
\newcommand{\Bm}{{\bf B}}

\newcommand{\Hm}{{\bf H}}

\newcommand{\Pm}{{\bf P}}

\newcommand{\Sm}{{\bf S}}
\newcommand{\Tm}{{\bf T}}

% Calligraphic

\newcommand{\Hc}{{\cal H}}

\newcommand{\Sc}{{\cal S}}
\newcommand{\Tc}{{\cal T}}

% Bold greek letters

% mixed symbols

% customized commands and symbols

\newtheorem{theorem}{Theorem}

\begin{document}

\title{Guided Signal Reconstruction with Application to \\
Image Magnification %GlobalSIP15
\vspace{-0.25\baselineskip}}
\author{\IEEEauthorblockN{Akshay Gadde}
\IEEEauthorblockA{University of Southern California\\
%line 2: name of organization, acronyms acceptable\\
%line 3: City, State/Province, Country\\
%line 4: e-mail address if desired\\
agadde@usc.edu
\vspace{-2\baselineskip}
}
\and
\IEEEauthorblockN{Andrew Knyazev, Dong Tian, Hassan Mansour}
\IEEEauthorblockA{Mitsubishi Electric Research Laboratories\\
%line 2: name of organization, acronyms acceptable\\
%line 3: City, State/Province, Country\\
%line 4: e-mail address if desired
\{knyazev, tian, mansour\}@merl.com
\vspace{-2\baselineskip}
}}

%\author{%Anonymous
%Akshay Gadde,  Andrew Knyazev, Dong Tian, Hassan Mansour
%}

\maketitle

\begin{abstract}
We study the problem of reconstructing a signal from its projection on a subspace.
The proposed signal reconstruction algorithms utilize a guiding subspace that represents desired properties of reconstructed signals.
We show that optimal reconstructed signals belong to a convex bounded set, called the ``reconstruction'' set. 
We also develop iterative algorithms, based on conjugate gradient methods, 
to approximate optimal reconstructions with low memory and computational costs. 
The effectiveness of the proposed approach is demonstrated for image magnification, 
where the reconstructed image quality is shown to exceed that of consistent and generalized reconstruction schemes.
\end{abstract}

%\vspace{-0.2\baselineskip}
\section{Introduction}
%\vspace{-0.2\baselineskip}
The problem of reconstructing a signal from its partial observations is of fundamental importance in signal processing. 
A classical example is reconstruction of continuous bandlimited signals from their discrete time samples. 
There are numerous applications of signal reconstruction, e.g.,\ 
%in audio, image and video processing, and machine learning. Examples include 
image super-resolution~\cite{park2003super}, increasing audio frequency range~\cite{bansal2005bandwidth}, and semi-supervised learning~\cite{Gadde-KDD-14}.

We consider the problem of determining a reconstruction $\hat{\fv}$ of an original signal $\fv$ from its measurement obtained by taking an orthogonal projection $\Sm\fv$ onto a subspace $\Sc$ called the \emph{sampling subspace}. 
Since sampling involves loss of information, we need some \emph{a priori} assumptions on the original signal $\fv$ to be recovered. One such assumption may be that the signal $\fv$ belongs to a subspace $\Tc$ (of a vector space $\Hc$) that can be thought of as a \emph{target reconstruction subspace}.
We prefer to call $\Tc$ a \emph{guiding reconstruction subspace}, because, even though we expect $\fv$ to have most of its energy contained in $\Tc$, in our technique the reconstructed signal $\hat{\fv}$ is not necessarily restricted to $\Tc$. 
The guiding subspace of the signal can be determined using some model of desirable reconstructed signal behavior.
For example, it can be learned from a training dataset~\cite{bansal2005bandwidth}.
For signals in euclidean spaces with natural spectral properties, a space of signals bandlimited in the transform domain (such as Fourier, cosine and wavelet) can be chosen as the guiding subspace. This idea can be extended to signals defined on manifolds or graphs, where $\Tc$ can be chosen as the space of smooth signals given by the linear combinations of first few eigenvectors of the Laplacian operator associated with the manifold or the graph~\cite{Anis-ICASSP-14}.

A reconstruction $\hat{\fv}$ is said to be sample consistent if $\Sm\hat{\fv} = \Sm\fv$, i.e., the measurements remain unchanged. 
A set of all signals, having the same samples $\Sm\fv$ is a plane $\Sm\fv+\Sc^\perp$
that we call a \emph{sample consistent plane},
where $\Sc^\perp$ is the orthogonal complement to the sampling subspace $\Sc$.
The sets $\Sm\fv+\Sc^\perp$ and $\Tc$ in general may not intersect. In such a case, there is no sample consistent reconstruction which is also in $\Tc$. For a solution, which is in both $\Tc$ and $\Sm\fv + \Sc^\perp$ to exist for any $\fv$, we need $\Tc + \Sc^\perp = \Hc$. 
Additionally, for such a solution to be unique we need $\Tc \cap \Sc^\perp = \{\zerov\}$. 
If both the existence and uniqueness conditions are satisfied, then a unique sample consistent solution in $\Tc$ is given by $\Pm_{\Tc\perp\Sc}\fv$, where $\Pm_{\Tc\perp\Sc}$ is an oblique projector on $\Tc$ along $\Sc^\perp$~\cite{Unser-TSP-94, Eldar-JFA-03}.
Non-uniqueness caused by $\Tc \cap \Sc^\perp \neq \{\zerov\}$ can be mathematically resolved by replacing $\Hc$ with a quotient space $\Hc/\{\Tc\cap \Sc^\perp\}$. 
In practice, one can choose a unique solution by imposing additional constraints; see, e.g.~\cite{Hirabayashi-TSP-07}. 

The assumption $\Tc + \Sc^\perp = \Hc$ can be disadvantageous and very restrictive in applications. Even though it guarantees the existence of the intersection of $\Sm\fv+\Sc^\perp$ and $\Tc$, finding this intersection numerically may be difficult as it is very sensitive to their mutual position (especially in high dimensions). It can lead to oblique projectors with large norms and make the reconstruction unstable.

To counter this, oversampling is advocated \cite{adcock2012generalized}, leading to a smaller consistent plane $\Sm\fv+\Sc^\perp$ that may no longer intersect with $\Tc$.
In \cite{Berger-2013}, the reconstructed signal is defined as a point in $\Tc$ having the smallest distance to $\Sm\fv+\Sc^\perp$ by enforcing the constraint $\hat{\fv}\in\Tc$ and allowing $\hat{\fv}$ to be sample inconsistent. This is known as ``generalized reconstruction''. Generalized reconstruction is more stable. However, the stability comes at the cost of potentially producing a sample inconsistent signal.
In contrast, \cite{bansal2005bandwidth} describes a reconstruction method, which places the reconstructed signal in the sample consistent plane, relaxing the constraint that $\hat{\fv}\in\Tc$ by minimizing the energy of the reconstruction in $\Tc^\perp$ (the orthogonal complement of $\Tc$). Thus, the reconstructed signal is a point in $\Sm\fv+\Sc^\perp$ having the smallest distance to $\Tc$.
This approach is motivated by a realization that in practice it is hard to find a subspace $\Tc$ such that the signals of interest are completely contained in it. Thus, the subspace $\Tc$ is used as a guide, not as a true target, placing the trust on sampling. However, this may not be desirable, if the samples are noisy.

In this paper, we provide a unified view of signal reconstruction under the oversampling scenario. In Section~\ref{s:rs}, we define a set of reconstructions given by the convex combinations of generalized reconstruction and consistent reconstruction and show that it lies on the shortest pathway between the consistent hyperplane and the guiding subspace.
A novel formulation of the sample consistent reconstruction in case of oversampling is provided in Section~\ref{s:pra}.
We present a conjugate gradient (CG) based iterative method to find the sample consistent reconstruction. Our formulation allows an implicit frame-less description of the guiding subspace via an action of the corresponding orthogonal projector $\Tm$ (that can be approximate) and also causes CG iterations to converge faster. This solution can then be projected on $\Tc$ to obtain generalized reconstruction.
Section~\ref{sec:regularization} connects our reconstruction set to a solution of a regularization-based reconstruction problem, cf.~\cite{Narang-GlobalSIP-13}, 
that minimizes the weighted sum of a sample consistency and a smoothness term. %and, thus, provides an efficient way to deal with noisy samples.
We apply our approach to image magnification in Section \ref{s:s}, 
where the reconstructed image quality is shown to exceed that of consistent and generalized reconstructions.%\vspace{-1\baselineskip}

%\newpage

\section{Reconstruction Set}\label{s:rs}
%\vspace{-0.2\baselineskip}
We consider the problem of reconstruction in the oversampling scenario. As stated before, the guiding set (or subspace) may not contain any sample consistent solutions. When the samples are noisy, the original signal does not lie in the sample consistent plane. On the other hand, the original signal may not be entirely contained in the guiding subspace either. In such a case, it is not clear which reconstruction, consistent~\cite{bansal2005bandwidth} or generalized~\cite{adcock2012generalized, Berger-2013}, is better. This situation is illustrated in Fig.~\ref{fig:1} by a simple geometric example, where $\dim\Hc=3$, $\dim\Sc=2$ and $\dim\Tc=1.$ 
Here, the set of all signals having the same 
sample $\Sm\fv$ is a line $\Sm\fv+\Sc^\perp$.
The lines $\Sm\fv+\Sc^\perp$ and $\Tc$ in general do not intersect and no reconstruction $\hat{\fv}$ can be constrained to 
both lines as required in~\cite{Unser-TSP-94, Eldar-JFA-03}. 
        \begin{figure}
        \centering
                \includegraphics[width=0.6\linewidth]{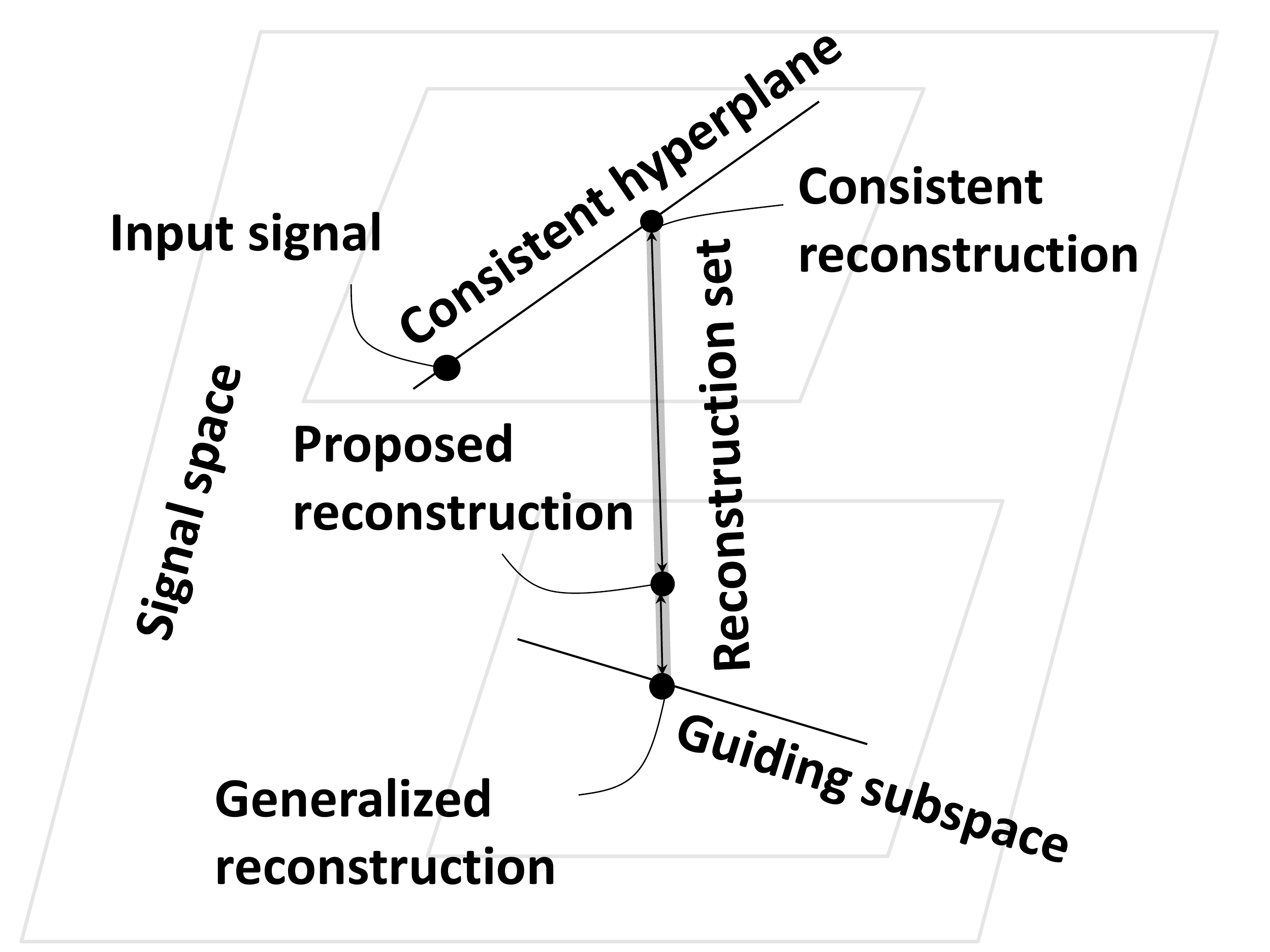}
                \caption{An example with $\dim  \Hc = 3, \dim \Sc = 2,  \dim \Tc = 1$}\label{fig:1}
                \vspace{-1\baselineskip}
        \end{figure}

We see in Fig.~\ref{fig:1} that the consistent reconstruction of \cite{bansal2005bandwidth} can be viewed as an element \emph{from} the consistent plane $\Sm\fv+\Sc^\perp$ which minimizes the distance \emph{to} the guiding subspace $\Tc$. On the other hand, the generalized reconstruction of \cite{Berger-2013} is an element \emph{from} the guiding subspace $\Tc$, minimizing the distance \emph{to} the consistent plane $\Sm\fv+\Sc^\perp$. Clearly, the following equalities hold:
%\small 
\begin{equation}
\min_{\hat\fv\in\Sm\fv+\Sc^\perp}\min_{\tv\in\Tc} \|\hat{\fv}-\tv\|=
\min_{\substack{\hat{\fv}\in\Sm\fv+\Sc^\perp\\ \tv\in\Tc}} \|\hat{\fv}-\tv\|
=\min_{\tv\in\Tc}\min_{\hat{\fv}\in\Sm\fv+\Sc^\perp} \|\hat{\fv}-\tv\|.
\label{eq:recon_problem_geom}
\end{equation}
The above equations suggest defining  a \emph{reconstruction set} as a shortest pathway set 
\emph{between} the consistent plane $\Sm\fv+\Sc^\perp$ and the guiding subspace~$\Tc$.
In Fig.~\ref{fig:1}, the reconstruction set is a line segment with the end points given by the consistent reconstruction
and the generalized reconstruction. Any element of the reconstruction set is a valid candidate for reconstruction when the sampling or guiding procedures are not known to be reliable. Section~\ref{sec:regularization} shows an example of selecting an optimal solution from the reconstruction set when sampling is noisy and the amount of noise is known. 

\section{Algorithm for finding the reconstruction set}\label{s:pra}
We now propose a novel algorithm for finding the sample consistent reconstruction~\cite{bansal2005bandwidth}, which relaxes the constraint that $\hat{\fv} \in \Tc$ and instead minimizes the energy in $\Tc^\perp$ while maintaining sample consistency, as follows, \begin{equation}
\inf_{\hat{\fv}} \|\hat{\fv}-\Tm\hat{\fv}\| \text{ subject to } \Sm\hat{\fv} = \Sm\fv.
\label{eq:recon_problem}
\end{equation}
The above is equivalent to the problem  
\begin{equation}
\inf_{\hat\xv \in \Sc^\perp} \Braket{\left(\hat\xv+\Sm\fv\right), \Tm^\perp \left(\hat\xv+\Sm\fv\right)},
\label{eq:opt_energy}
\end{equation}
where $\hat\xv=\hat\fv-\Sm\fv$.
If the solutions $\hat{\fv}$ and $\hat{\xv}$ to problems  \eqref{eq:recon_problem} and  \eqref{eq:opt_energy}, respectively, are not unique, we choose solutions in the corresponding factor spaces, e.g., the normal solution (i.e., with the smallest norm), to guarantee uniqueness.
%Least squares minimization formulations \eqref{eq:recon_problem} and  \eqref{eq:opt_energy} have an elegant geometric interpretation (as illustrated before in Fig.~\ref{fig:1}) given by \eqref{eq:recon_problem_geom}.
%%
%%
%The first minimization problem in \eqref{eq:recon_problem_geom} is exactly the problem given by \eqref{eq:recon_problem}. 
%The second  minimization problem in \eqref{eq:recon_problem_geom} simply determines the shortest distance between the sample-consistent plane $\Sm\fv+\Sc^\perp$ and the guiding subspace~$\Tc$. In the last minimization problem, we swap the order of  minimization.
%%
%A solution $\tv\in\Tc$ of this problem is the generalized reconstructed signal proposed in \cite{adcock2012generalized,Berger-2013}.

It can be shown that problem \eqref{eq:opt_energy} is equivalent to solving $\Am\xv = \bv$ with $\Am = \left(\Sm^\perp \Tm^\perp\right)\big| _{\Sc^\perp}$ and $\bv = - \Sm^\perp \Tm^\perp \Sm\fv$. The notation $\left(\Sm^\perp \Tm^\perp\right)\big| _{\Sc^\perp}$ makes it explicit that the domain of $\Sm^\perp \Tm^\perp$ is restricted to $\Sc^\perp$.
Conjugate gradient method (CG) is an optimal iterative method for solving  $\Am\xv = \bv$, when $\Am$ is a linear self-adjoint non-negative operator with bounded (pseudo) inverse. Although $\Sm^\perp \Tm^\perp$ is not self-adjoint in general, the restriction of $\Sm^\perp \Tm^\perp$ to $\Sc^\perp$ is self-adjoint. Therefore, 
we can use CG with $\Am = \left(\Sm^\perp \Tm^\perp\right)\big| _{\Sc^\perp}$ and $\bv = - \Sm^\perp \Tm^\perp \Sm\fv$. When the solution is not unique, CG converges to the unique normal solution $\hat{\xv}$ with minimum norm. Note that for CG to converge to the right solution, it must be initialized with some $\xv_0 \in \Sc^\perp$. 

In the special case when $\Tc+\Sc^\perp = \Hc$ and $\Tc \cap \Sc^\perp = \{\zerov\}$, 
the solution $\hat{\fv}$ of \eqref{eq:recon_problem} is the same as the result of the oblique projection $\Pm_{\Tc\perp\Sm}\fv$ in~\cite{Unser-TSP-94, Eldar-JFA-03}. But our formulation and the resulting algorithms are different since they are based only on actions of orthogonal projectors $\Tm$ and $\Sm$. 
The solution $\hat{\fv}$ of \eqref{eq:recon_problem} is closely related to generalized reconstruction given by a signal in $\Tc$ that minimizes the distance to $\Sm\fv + \Sc^\perp$.
This reconstruction is given by $\Pm_{\Tc\perp\Sm(\Tc)}\fv$, where $\Pm_{\Tc\perp\Sm(\Tc)}$ is the oblique projector onto the subspace $\Tc$ along the orthogonal complement to the subspace $\Sm(\Tc)\subseteq\Sc$~\cite{Berger-2013}. Note that generalized reconstruction does not require $\Tc+\Sc^\perp = \Hc$ (and thus, also allows for oversampling) but it may be sample inconsistent.
As illustrated in Fig.~\ref{fig:1},
it is easy to show, using \eqref{eq:recon_problem_geom},  that  $\Pm_{\Tc\perp\Sm(\Tc)}\fv=\Tm\hat{\fv}$. 
One can also show that, when the samples are noise free, reconstruction error is always better with the consistent least squares reconstruction given by \eqref{eq:recon_problem} than with the generalized reconstruction, i.e. $\|\fv-\hat{\fv}\|\leq\|\fv-\Tm\hat{\fv}\|$. We omit the details due to lack of space.

Once the consistent reconstruction $\hat{\fv}\in\Sm\fv+\Sc^\perp$ and generalized reconstruction $\tv=\Tm\hat{\fv}\in\Tc$ are known, any element in the reconstruction set, defined in Section~\ref{s:rs}, can be obtained by taking their convex combination $\alpha\hat{\fv}+(1-\alpha)\Tm\hat{\fv}$, where $\alpha \in [0,1]$. 
%as a closed interval with the end points $\hat{\fv}\in\Sm\fv+\Sc^\perp$ and $\tv=\Tm\hat{\fv}\in\Tc$.
If the samples are noise-free, we choose our reconstruction to be sample consistent,  $\hat{\fv}\in\Sm\fv+\Sc^\perp$. 
If there is noise in sample measurements, we may decide to trust the guiding subspace $\Tc$ more than the sample $\Sm\fv$ 
and choose as our output reconstruction a convex combination $\alpha\hat{\fv}+(1-\alpha)\Tm\hat{\fv}$ within the reconstruction set, 
where $0\leq\alpha<1.$ 

%\vspace{-0.5\baselineskip}
\section{Relation to Regularized Reconstruction}
\label{sec:regularization}
%\vspace{-0.25\baselineskip}
%
Regularization-based methods for reconstruction~\cite{Narang-GlobalSIP-13} can be formulated in our notation as the following unconstrained quadratic minimization problem 
\begin{equation}
\inf_{\hat{\fv}_\rho}  \; \|\Sm\hat{\fv} _\rho-\Sm\fv\|^2 +\rho\|\Hm\hat{\fv}_\rho\|^2,\quad \rho>0,
\label{eq:reg_problem_orig}
\end{equation}
where the operator $\Hm$ can be thought of as a high pass filter, e.g., it may approximate our $\Tm^\perp$,
in which case problem \eqref{eq:reg_problem_orig} approximates 
\begin{equation}
\inf_{\hat{\fv}_\rho} \; \left\|\Sm\hat{\fv}_\rho-\Sm\fv\right\|^2 +\rho\left\|\left(\hat{\fv}_\rho-\Tm\hat{\fv}_\rho\right)\right\|^2 .
\label{eq:reg_problem}
\end{equation}
Problem  \eqref{eq:reg_problem} can be viewed as a relaxation of \eqref{eq:recon_problem}. 
We prove below that the set of all solutions of \eqref{eq:reg_problem} for varying $\rho>0$ is nothing but our reconstruction set (with end points removed). 
\begin{theorem}\label{thm:reg}
Let the elements of the reconstruction set be given by the formula $\hat{\fv}_\alpha=\alpha\hat{\fv}+(1-\alpha)\Tm\hat{\fv}$, where $0\leq\alpha\leq1,$ and $\hat\fv$ solves~\eqref{eq:recon_problem}.
Then the vector $\hat{\fv}_\alpha$ is a solution of  
problem  \eqref{eq:reg_problem} with $\rho=(1-\alpha)/\alpha$.
\end{theorem}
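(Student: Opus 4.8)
The plan is to characterize the minimizer of the regularized functional \eqref{eq:reg_problem} through its first-order optimality condition and then verify directly that $\hat{\fv}_\alpha$ satisfies it for the stated value of $\rho$. First I would rewrite the objective of \eqref{eq:reg_problem} in terms of the orthogonal projectors, writing $\|\Sm\hat{\fv}_\rho - \Sm\fv\| = \|\Sm(\hat{\fv}_\rho - \fv)\|$ and $\|\hat{\fv}_\rho - \Tm\hat{\fv}_\rho\| = \|\Tm^\perp\hat{\fv}_\rho\|$, so that the functional becomes $J_\rho(\gv) = \langle \Sm(\gv-\fv), \Sm(\gv-\fv)\rangle + \rho\langle \Tm^\perp\gv, \Tm^\perp\gv\rangle$. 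Since $\Sm$ and $\Tm^\perp$ are self-adjoint idempotents, $J_\rho$ is a convex quadratic, so a vector is a global minimizer if and only if it annihilates the gradient. Setting the gradient to zero yields the normal equation
\begin{equation}
(\Sm + \rho\,\Tm^\perp)\,\hat{\fv}_\rho = \Sm\fv.
\label{eq:normal}
\end{equation}
It then suffices to show that $\hat{\fv}_\alpha$ solves \eqref{eq:normal} with $\rho = (1-\alpha)/\alpha$.

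Next I would extract the two properties of $\hat{\fv}$ that a solution of \eqref{eq:recon_problem} must satisfy. Sample consistency gives $\Sm\hat{\fv} = \Sm\fv$ directly from the constraint. The first-order optimality of \eqref{eq:recon_problem} --- minimizing $\|\Tm^\perp\hat{\fv}\|^2$ over the affine set $\Sm\fv + \Sc^\perp$, whose feasible directions are exactly $\Sc^\perp$ --- forces the gradient $\Tm^\perp\hat{\fv}$ to be orthogonal to $\Sc^\perp$, i.e. $\Tm^\perp\hat{\fv}\in\Sc$, equivalently $\Sm^\perp\Tm^\perp\hat{\fv} = \zerov$ (the same condition encoded in the operator equation $\Am\xv=\bv$ stated earlier). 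I would then decompose $\hat{\fv} = \tv + \wv$ with $\tv := \Tm\hat{\fv}\in\Tc$ and $\wv := \Tm^\perp\hat{\fv}\in\Tc^\perp$, so that $\hat{\fv}_\alpha = \tv + \alpha\wv$, and record the two consequences $\Sm\wv = \wv$ (from $\wv\in\Sc$) and $\Sm\tv = \Sm\fv - \wv$ (from $\Sm\hat{\fv}=\Sm\tv+\wv=\Sm\fv$).

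Finally I would substitute into \eqref{eq:normal}. Using $\Tm^\perp\hat{\fv}_\alpha = \alpha\wv$ and $\Sm\hat{\fv}_\alpha = \Sm\tv + \alpha\wv = \Sm\fv - (1-\alpha)\wv$, the left-hand side of \eqref{eq:normal} becomes $\Sm\fv + \big(\rho\alpha - (1-\alpha)\big)\wv$. This equals $\Sm\fv$ precisely when $\rho\alpha = 1-\alpha$, i.e. $\rho = (1-\alpha)/\alpha$, which is the claimed relation; for $0<\alpha<1$ this gives a genuine $\rho>0$, while $\alpha=1$ ($\rho=0$) and $\alpha=0$ ($\rho\to\infty$) recover the two endpoints as limiting cases, consistent with the endpoints being excluded from the solution set.

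I expect the main obstacle to be justifying the optimality condition $\Sm^\perp\Tm^\perp\hat{\fv}=\zerov$ cleanly rather than the algebra that follows: one must argue that $\hat{\fv}$ is not merely feasible but stationary for \eqref{eq:recon_problem}, handle the possibility that the minimizer of \eqref{eq:recon_problem} (or of \eqref{eq:reg_problem}) is non-unique by invoking the normal/minimum-norm solution as the paper does, and confirm that convexity of $J_\rho$ lets the single stationarity condition \eqref{eq:normal} certify a global minimum. Once $\wv\in\Sc$ is established, the remaining computation is a short linear substitution.
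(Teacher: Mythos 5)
Your proposal is correct and follows essentially the same route as the paper's proof: both reduce \eqref{eq:reg_problem} to the linear equation $(\Sm+\rho\Tm^\perp)\hat{\fv}_\rho=\Sm\fv$, characterize $\hat{\fv}$ by $\Sm\hat{\fv}=\Sm\fv$ together with $\Sm^\perp\Tm^\perp\hat{\fv}=\zerov$, and verify the claim by direct substitution. Your decomposition $\hat{\fv}=\Tm\hat{\fv}+\Tm^\perp\hat{\fv}$ with $\Tm^\perp\hat{\fv}\in\Sc$ simply makes explicit the ``elementary calculations'' that the paper leaves to the reader.
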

\begin{proof}
On the one hand, minimization problem \eqref{eq:reg_problem} is equivalent to the following linear equation
$\left(\Sm+\rho\Tm^\perp\right)\hat{\fv}_\rho=\Sm\fv.$
On the other hand, the consistent reconstruction $\hat\fv$ solves
$\Sm^\perp\Tm^\perp \hat{\fv} = \zerov$ and $\Sm\hat{\fv} = \Sm\fv$. Taking $\rho=(1-\alpha)/\alpha$ and substituting
$\hat{\fv}_\alpha$ 
 for  $\hat{\fv}_\rho$, we obtain by elementary calculations
 \[
 \left(\Sm+\frac{1-\alpha}{\alpha}\Tm^\perp\right)\left(\alpha\hat{\fv}+(1-\alpha)\Tm\hat{\fv}\right) = \Sm\fv
 \]
using properties of $\Sm$ and $\Tm$ as projectors.
\end{proof}
If there exists a unique intersection of the sample-consistent plane $\Sm\fv+\Sc^\perp$ and the guiding subspace $\Tc$, as assumed in \cite{Narang-GlobalSIP-13}, then this intersection $\hat{\fv}=\Tm\hat{\fv}$. Our reconstruction set is thus trivially reduced to this single element 
$\hat{\fv}=\Tm\hat{\fv}$. By Theorem~\ref{thm:reg}, the minimizer  $\hat{\fv}_\rho$ in \eqref{eq:reg_problem} is simply
 $\hat{\fv}_\rho=\hat{\fv}=\Tm\hat{\fv}$, no matter what the value of $\rho>0$ is. 

If our reconstruction set is non-trivial, we can move the reconstruction away from the sample-consistent plane $\Sm\fv+ \Sc^\perp$ toward the guiding subspace $\Tc$. This may be beneficial when sampling is noisy.    
A specific value of the regularization parameter needs to be chosen {\it a~priori}  according to a noise level, if problem  \eqref{eq:reg_problem} is solved directly. 
Theorem \ref{thm:reg} allows us to circumvent this problem and choose $\rho$ after determining the reconstruction set. Once we have $\hat{\fv}\in\Sm\fv+\Sc^\perp$ and $\tv=\Tm\hat{\fv}\in\Tc$, we can compute the solution to \eqref{eq:reg_problem} for a desired value of $\rho$ by taking the convex combination $\alpha\hat{\fv}+(1-\alpha)\Tm\hat{\fv}$ with $\rho = (1-\alpha)/\alpha$. This also allows us to try reconstructions with multiple values of $\rho$ without solving \eqref{eq:reg_problem} each time.
Specifically, let the measurements be $\Sm\fv +\ev$, where $\ev$ denotes the noise. If the noise energy $\|\ev\|$ is known then we can select
\begin{equation}
1-\alpha=\frac{\|\ev\|}{\|\hat{\fv}-\Tm\hat{\fv}\|}. 
\label{eqn:alpha_selection}
\end{equation}

\section{Simulations}\label{s:s}
In this section, we present an application of the proposed reconstruction approach to the image magnification problem.

\vspace{-0.25\baselineskip}
\subsection{Problem setting}
\vspace{-0.2\baselineskip}
Let $\fv$ be a high resolution image of size $w \times w$. We assume that a sampled low resolution version of $\fv$ is obtained by a sampling operator $\Bm^*_\Sc$ which downsizes the image by a factor of $r$ using $r \times r$ averaging and then downsampling. Its adjoint $\Bm_\Sc$ upsamples a low resolution image by simply copying each pixel value in a $r \times r$ block to get back a $w \times w$ image. Thus, the sampling subspace $\Sc \subset \mathbb{R}^{w \times w}$ is a space of images which take a constant value in each $r \times r$ block. Note that $\dim \Sc = w/r$. The projection $\Sm\fv = \Bm_\Sc\Bm^*_\Sc \fv$ of $\fv$ on $\Sc$ is obtained replacing the values in each of its $r\times r$ blocks by their average. Our goal is to estimate $\fv$ from the input signal $\Sm\fv$. 
We know that the DCT captures most of the energy of natural images into a first few low frequency coefficients. Thus, a reasonable guiding subspace $\Tc$ is a space of images which are bandlimited to the lowest $k \times k$ frequencies. The projector $\Tm$ for this subspace is simply a low pass filter which sets the higher frequency components of the image to zero.  The projector $\Tm$ can also be decomposed as $\Bm_\Tc\Bm^*_\Tc$. Here $\Bm^*_\Tc\fv$ involves taking the DCT of $\fv$ and setting the high frequency coefficients to zero whereas $\Bm_\Tc$ converts these DCT coefficients to spatial domain to get a low frequency image. 
In our experiments, we study the effect of $\dim \Tc = k \times k$ on the quality of the reconstruction.  
We define $k_{\text{scale}} = (w / r)/k$ which compares the dimensionality of the sampling and guiding subspace. The value $k_{\text{scale}} < 1$ corresponds to an undersampling problem,  while $k_{\text{scale}} > 1$ corresponds to an oversampling scenario.
A shorthand $\fv_d$ is used to denote the low resolution image $\Bm^*_\Sc \fv$ and $\fv_{du}$ to denote the projection $\Sm\fv$.
We also consider the scenario where the samples are contaminated by noise, i.e. $\fv^n_{d} = \Bm^*_\Sc \fv + \ev$, where $\ev$ is i.i.d. Gaussian noise. As a result, the input image becomes $\fv^n_{du} = \Bm_\Sc \fv_d^n$.

\subsection{Approaches under study}

We compare four reconstruction approaches, namely, the consistent reconstruction $\hat{\fv}_c$, the generalized reconstruction $\hat{\fv}_g$, the regularized reconstruction $\hat{\fv}_\alpha$ and the minimax regret~\cite{Eldar-TSP-06} reconstruction $\hat{\fv}_m = \Tm\fv_{du}$.
The consistent reconstruction $\hat{\fv}_c$ is calculated as $\hat{\fv}_c = \hat{\xv} + \fv_{du}$, where $\hat{\xv}$ is the solution to problem $\Sm^\perp \Tm^\perp \xv = - \Sm^\perp \Tm^\perp \fv_{du}$ obtained using the CG method. 

The generalized reconstruction $\hat{\fv}_g$ is computed using three different implementations. In the first implementation, we solve the problem $\Bm^*_\Tc \Sm \Bm_\Tc \yv = \Bm^*_\Tc \fv_{du}$ using CG to obtain $\hat{\yv}$. The final reconstruction is then given by $\hat{\fv}_{g1} = \Bm_\Tc \hat{\yv}$. The second implementation uses the projector $\Tm$ instead of the sampling operator $\Bm^*_\Tc$, and the reconstruction $\hat{\fv}_{g2}$ is the CG solution to the problem $\Tm \Sm \Tm \fv = \Tm \fv_{du}$. In the third implementation, $\hat{\fv}_c$ is supposed to be available, and the generalized reconstruction is then computed by $\hat{\fv}_{g3} = \Tm \hat{\fv}_c$. Mathematically, it can be proved that all these implementations would produce identical reconstructions when the CG algorithm fully converges. However, these methods are algorithmically distinct and may converge at different rates as shown in the tests later.

The regularized reconstruction $\hat{\fv}_r$, as posed in \eqref{eq:reg_problem}, can be computed by solving $(\Sm + \rho \Tm^\perp) \fv = \fv_{du}$ via CG. If $\hat{\fv}_c$  and $\hat{\fv}_g$ are available, we can simply take the convex combination $\hat{\fv}_\alpha = \alpha \hat{\fv}_c + (1-\alpha) \hat{\fv}_g$ with $\rho = ({1-\alpha})/{\alpha}$. Because of Theorem~\ref{thm:reg}, we have $\hat{\fv}_r = \hat{\fv}_\alpha$. Although these two solutions are mathematically equivalent (upon full convergence of CG), they are not similar algorithmically and exhibit different behavior and robustness against noise, when a small fixed number of CG steps is used in our tests.

\vspace{-0.25\baselineskip}
\subsection{Experiments and observations}
\vspace{-0.2\baselineskip}
We conduct four sets of experiments to study different aspects of the reconstruction methods such as the effect of under/oversampling, effect of noise and convergence behavior. In our example, $\dim \Hc = 256 \times 256$, $\dim \Sc = 128 \times 128$.

\subsubsection{Experiment 1}
In the first experiment, we take a noise-free signal $\fv_{du}$ as input and observe the peak signal to noise ratio ($\text{PSNR} = 20 \log_{10} \frac{255}{\|\fv - \hat{\fv}\|}$) of reconstruction for different methods as the value of $k_{\text{scale}}$ (i.e. amount of under/oversampling) varies. For computing $\fv_\alpha$, we first fix $\alpha = 0.7$. Fig.~\ref{fig:e1}(a) shows the plot of PSNR against $k_{\text{scale}}$. We observe that in the undersampling regime, i.e. when $k_{\text{scale}} < 1$, $\hat{\fv}_c$ equals $\hat{\fv}_g$ and performs better than $\hat{\fv}_m$. In case of oversampling, however, Fig.~\ref{fig:e1}(a) demonstrates that $\hat{\fv}_c$ offers better PSNR than $\hat{\fv}_g$  which, in turn, performs better than $\hat{\fv}_m$.
This behavior is due to the fact that the sampling is noise-free, thus the method that keeps the samples unchanged is expected to perform better. 
The effect of $\alpha$ on the reconstruction quality is illustrated in Fig.~\ref{fig:e1}(b). Once again, we observe that as $\alpha$ increases (i.e. the samples are trusted more), the reconstruction quality improves.

\begin{figure}
%\centering
        \begin{subfigure}[b]{0.15\textwidth}\center
                \includegraphics[height=.125\textheight]{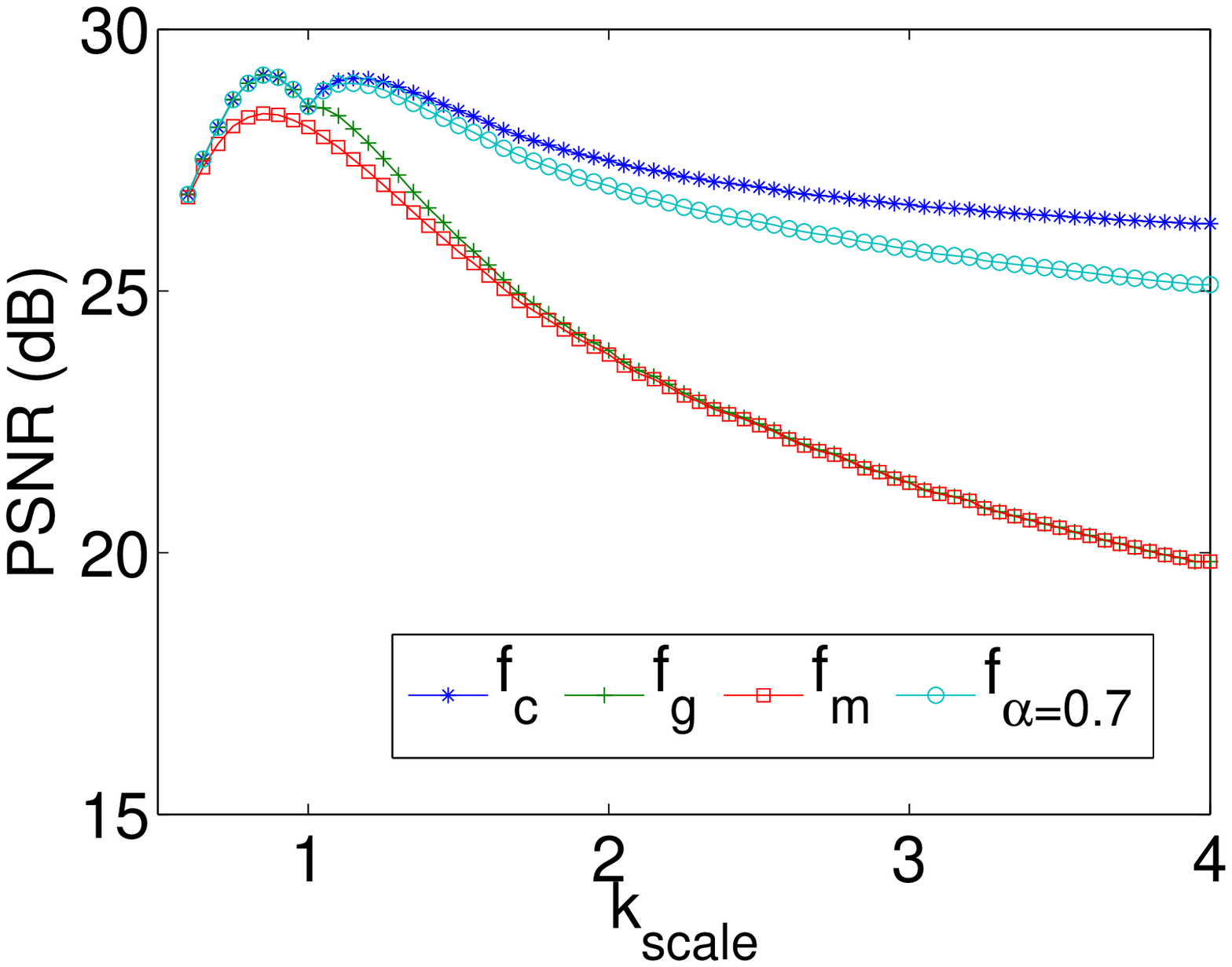}%{figs/sim/lena_k_scale.pdf}
                \caption{$\alpha=0.7$}
        \end{subfigure}
        \qquad\qquad
        \begin{subfigure}[b]{0.15\textwidth}\center
                \includegraphics[height=.125\textheight]{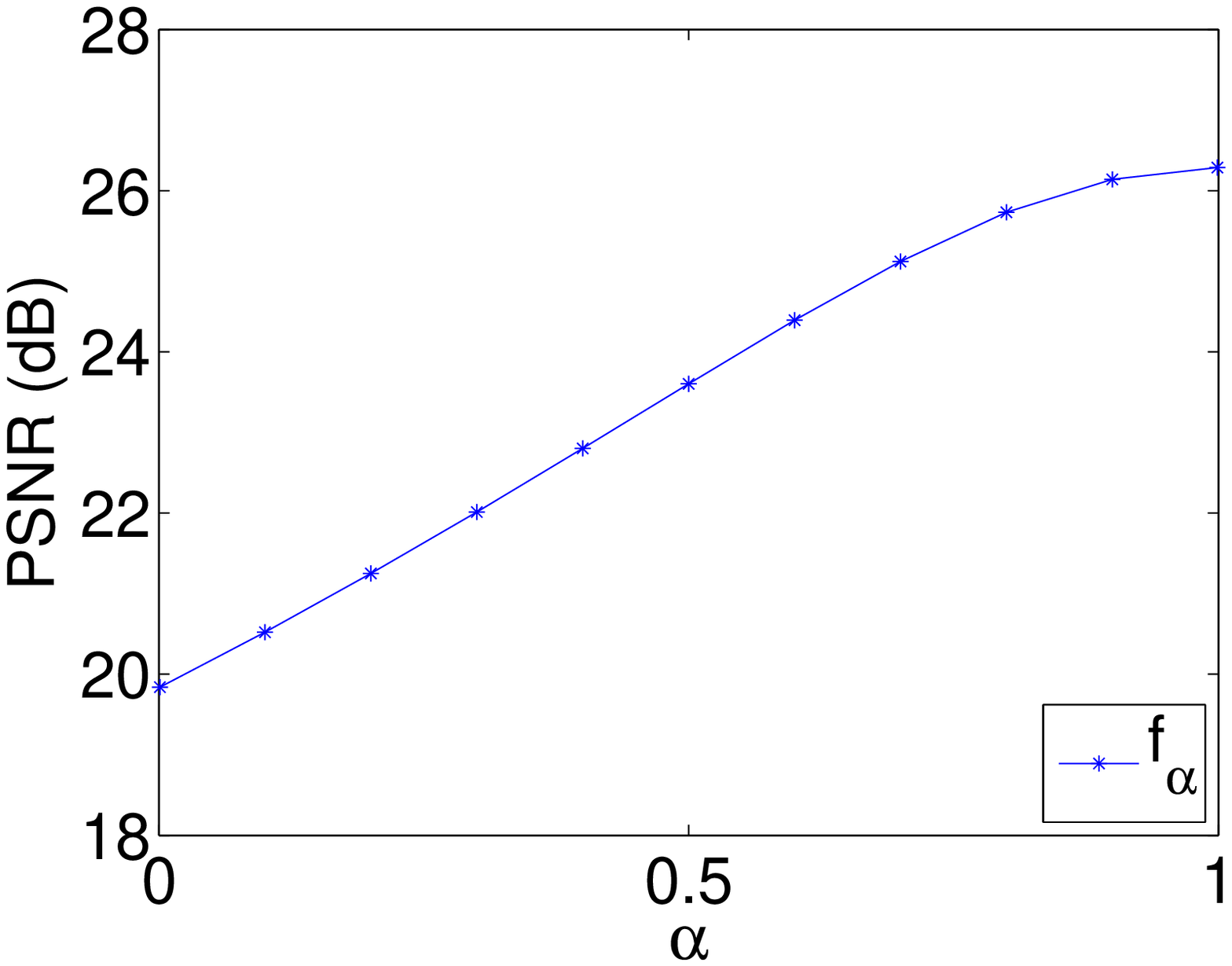}%{figs/sim/lena_alpha.pdf}
                \caption{$k_{\text{scale}} = 4$}
        \end{subfigure}
\caption{Effects of $k_{\text{scale}}$ and $\alpha$ on noise-free reconstruction}
\label{fig:e1}
\vspace{-1\baselineskip}
\end{figure}

\subsubsection{Experiment 2}
%
%For noise free inputs $\fv_{du}$, it was shown above that $\hat{\fv}_c = \hat{\fv}_{\alpha=1}$ is the best algorithm. 
In this experiment, we assume that the input $\fv_{du}^n = \Sm \fv + \ev$ is noisy, where $\ev$ is i.i.d. Gaussian  with zero mean and variance $0.001$.
We first focus on the performance of $\hat{\fv}_\alpha$ as $\alpha$ varies in case of oversampling by a factor $k_{\text{scale}} = 4$. From the results shown in Fig. \ref{fig:e3}(a), the best reconstruction is obtained with $\alpha = 0.7$. This observation agrees with the theoretically suggested optimal value $\alpha_{\text{opt}} = 1 - \|\ev\|^2/\|\hat{\fv}_g - \hat{\fv}_c\|^2 = 0.7$.  
\begin{figure}
%\centering
        \begin{subfigure}[b]{0.15\textwidth}\center
                \includegraphics[height=.125\textheight]{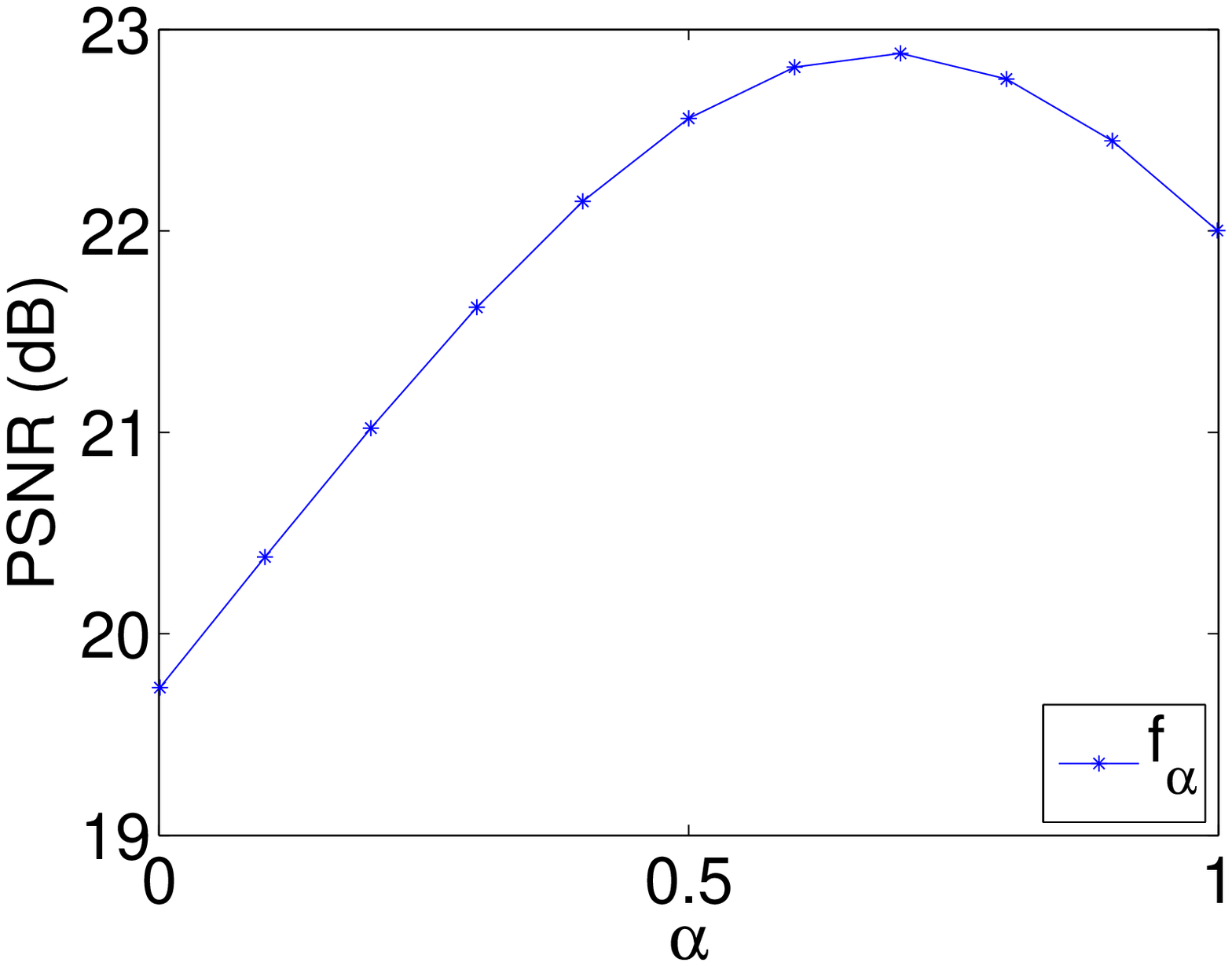}%{figs/sim/noise/lena_alpha.pdf}
                \caption{$k_{\text{scale}} = 4$}
        \end{subfigure}
        \qquad\qquad
        \begin{subfigure}[b]{0.15\textwidth}\center
                \includegraphics[height=.125\textheight]{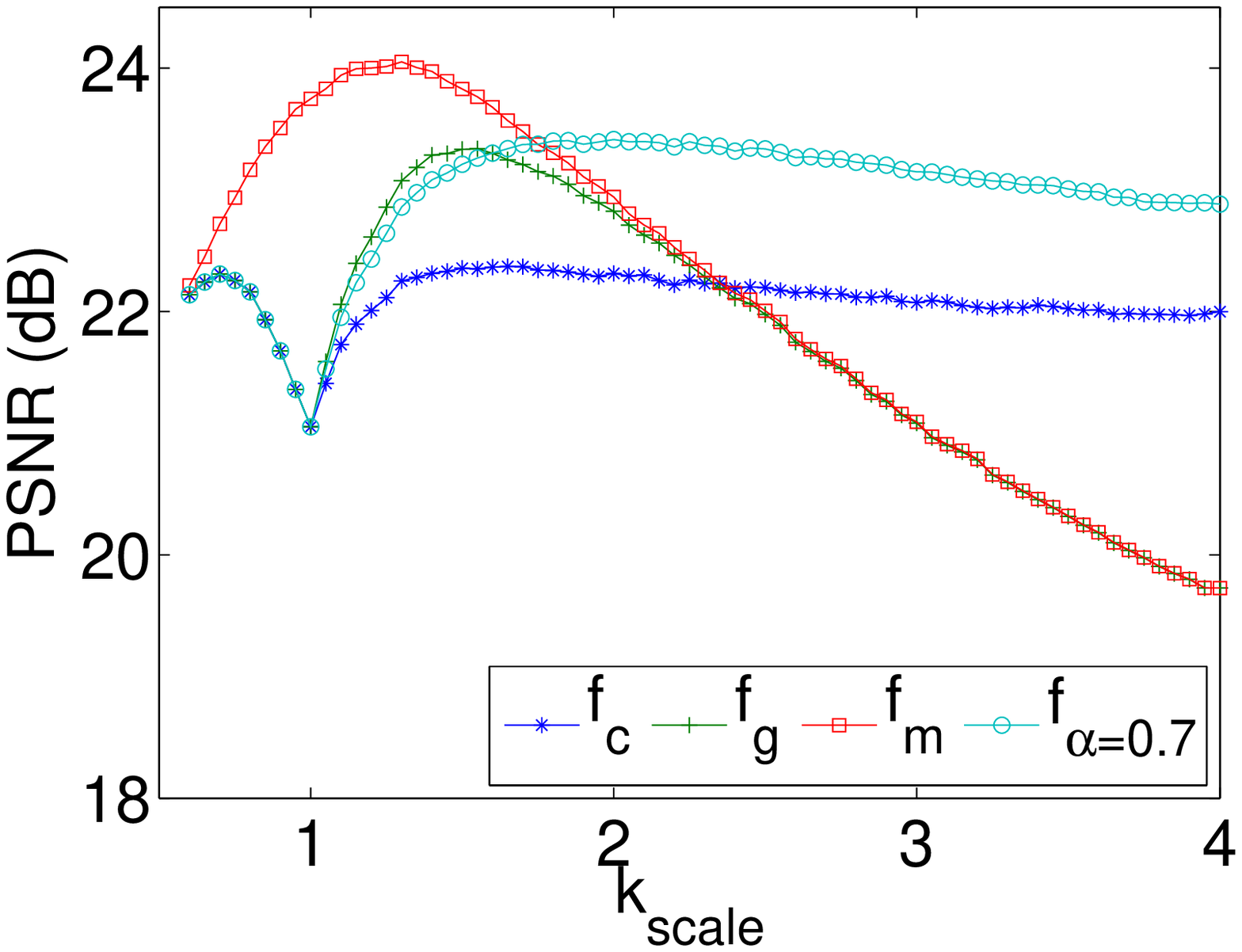}%{figs/sim/noise/lena_k_scale.pdf}
                \caption{$\alpha=0.7$}
        \end{subfigure}
\caption{Effects of $k_{\text{scale}}$ and $\alpha$ on noisy reconstruction}
\label{fig:e3}
\vspace{-1\baselineskip}
\end{figure}
Next, we analyze performance of $\hat{\fv}_g$, $\hat{\fv}_c$, $\hat{\fv}_m$, and $\hat{\fv}_{\alpha=0.7}$ for different values of $k_\text{scale}$, in Fig. \ref{fig:e3}(b). In~contrast to the previous noise-free experiment, we notice that $\hat{\fv}_c$ cannot always outperform $\hat{\fv}_g$ when noise is present. The reconstruction $\hat{\fv}_c$ only performs better than $\hat{\fv}_g$ in the heavy oversampling regime, in this example, with $k_{\text{scale}} > 2.5$. This observation indicates that, in case of slight oversampling, the noise filtering effect of the projection on guiding subspace offsets the loss due to sample inconsistency. For heavy oversampling, the sample consistency requirement is more important. 
%This implies that with noisy signal, increasing the dimension of sampling subspace will in general benefits the reconstruction quality. 
We also observe that $\hat{\fv}_\alpha$, which is a weighted combination of $\hat{\fv}_c$ and $\hat{\fv}_g$, can outperform both $\hat{\fv}_c$ and $\hat{\fv}_g$ for  $k_{\text{scale}} > ~1.5$ for this example image, since $\hat{\fv}_\alpha$ offers some noise suppression, while not deviating much from the consistency requirement. 
Fig.~\ref{fig:e4} gives an example of a noisy input image and corresponding reconstructed images, demonstrating the advantages of $\hat{\fv}_\alpha$
with $\alpha=0.7$ vs. the traditional $\hat{\fv}_g$, $\hat{\fv}_c$, and $\hat{\fv}_c$.
\begin{figure}[t]
%\vspace{-1\baselineskip}
\centering
        \begin{subfigure}[b]{0.2\textwidth}\center
                \includegraphics[height=.1\textheight]{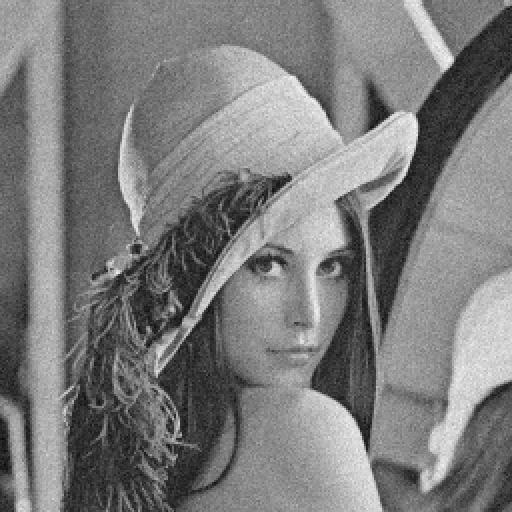}
                \caption{$\fv_{du}^*$, PSNR=21.69dB}
        \end{subfigure}
        \begin{subfigure}[b]{0.2\textwidth}\center
                \includegraphics[height=.1\textheight]{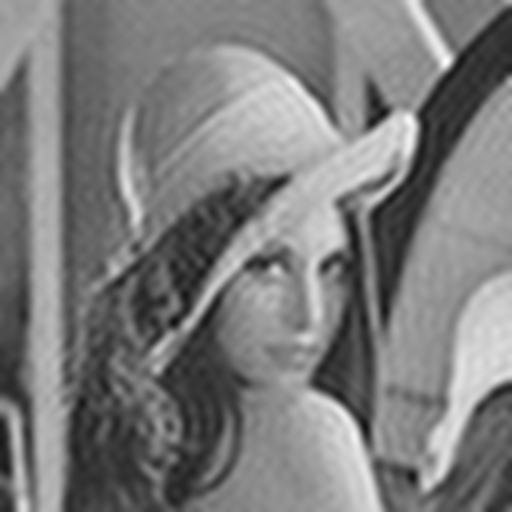}
                \caption{$\hat{\fv}_g$, PSNR=19.73dB}
        \end{subfigure}
        \begin{subfigure}[b]{0.2\textwidth}\center
                \includegraphics[height=.1\textheight]{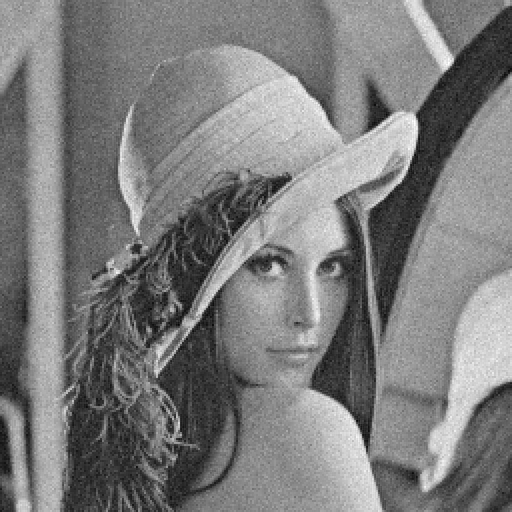}
                \caption{$\hat{\fv}_c$, PSNR=22.00dB}
        \end{subfigure}
        \begin{subfigure}[b]{0.2\textwidth}\center
                \includegraphics[height=.1\textheight]{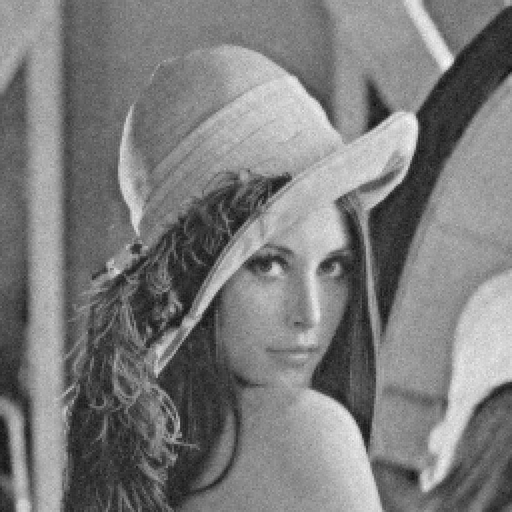}
                \caption{$\hat{\fv}_{\alpha=0.7}$, PSNR=22.88dB}
        \end{subfigure}
\caption{Reconstruction results with noisy inputs, $k_{\text{scale}} = 4$}
\label{fig:e4}
\vspace{-1\baselineskip}
\end{figure}

\subsubsection{Experiment 3}

In this experiment, we investigate the relationship between $\hat{\fv}_\alpha$ and $\hat{\fv}_r$ in case of noisy inputs. 
%since for the noise free case, there is not need to look for $\hat{\fv}_\alpha$ or $\hat{\fv}_r$, because $\hat{\fv}_c$ is always the best. 
Numerical results confirm that if the parameter $\rho$ or $\alpha$ is known beforehand and fixed, the two approaches, despite having different implementations, give identical reconstructions. 
However, if the parameter $\rho$ or $\alpha$ needs to be determined on the fly, the reconstruction $\hat{\fv}_\alpha$ is clearly favorable compared to  $\hat{\fv}_r$ in terms of computation complexity.
For determining the whole set of solutions $\{\hat{\fv}_\alpha\}$, for $\alpha \in (0, 1)$, only \emph{one} least squares problem needs to be solved, to compute $\hat{\fv}_c$. All other candidate solution points can be calculated by $\alpha \hat{\fv}_c + (1-\alpha)\Tm\hat{\fv}_c$, since $\hat{\fv}_g = \Tm \hat{\fv}_c$.
In contrast, search through the full set of $\{\hat{\fv}_r\}$, for the optimal $\rho \in (0, \infty)$, \emph{one} least squares problem needs to be solved for \emph{each} candidate solution, which may not be computationally feasible.

\subsubsection{Experiment 4}

In this experiment, we compare the performance of the reconstruction methods in terms of the number of CG iterations.
As described before, computing $\hat{\fv}_g$ can be performed in three different ways, represented by $\hat{\fv}_{g1}$, $\hat{\fv}_{g2}$, and $\hat{\fv}_{g3}$. In Fig.~\ref{fig:e5}, we compare the three implementations with number of CG iterations, $MaxIter$,  set to 1 and 2 applied to the noisy input. We observe that $\hat{\fv}_{g1} = \hat{\fv}_{g2}$ in both cases. Although $\hat{\fv}_{g3}$ is different when the number of iterations is 1, as seen in Fig.~\ref{fig:e5}(a), the difference becomes very minor when the number of iterations equals 2. This observation also holds for noise-free inputs.
\begin{figure}
%\centering
        \begin{subfigure}[b]{0.15\textwidth}\center
                \includegraphics[height=.125\textheight]{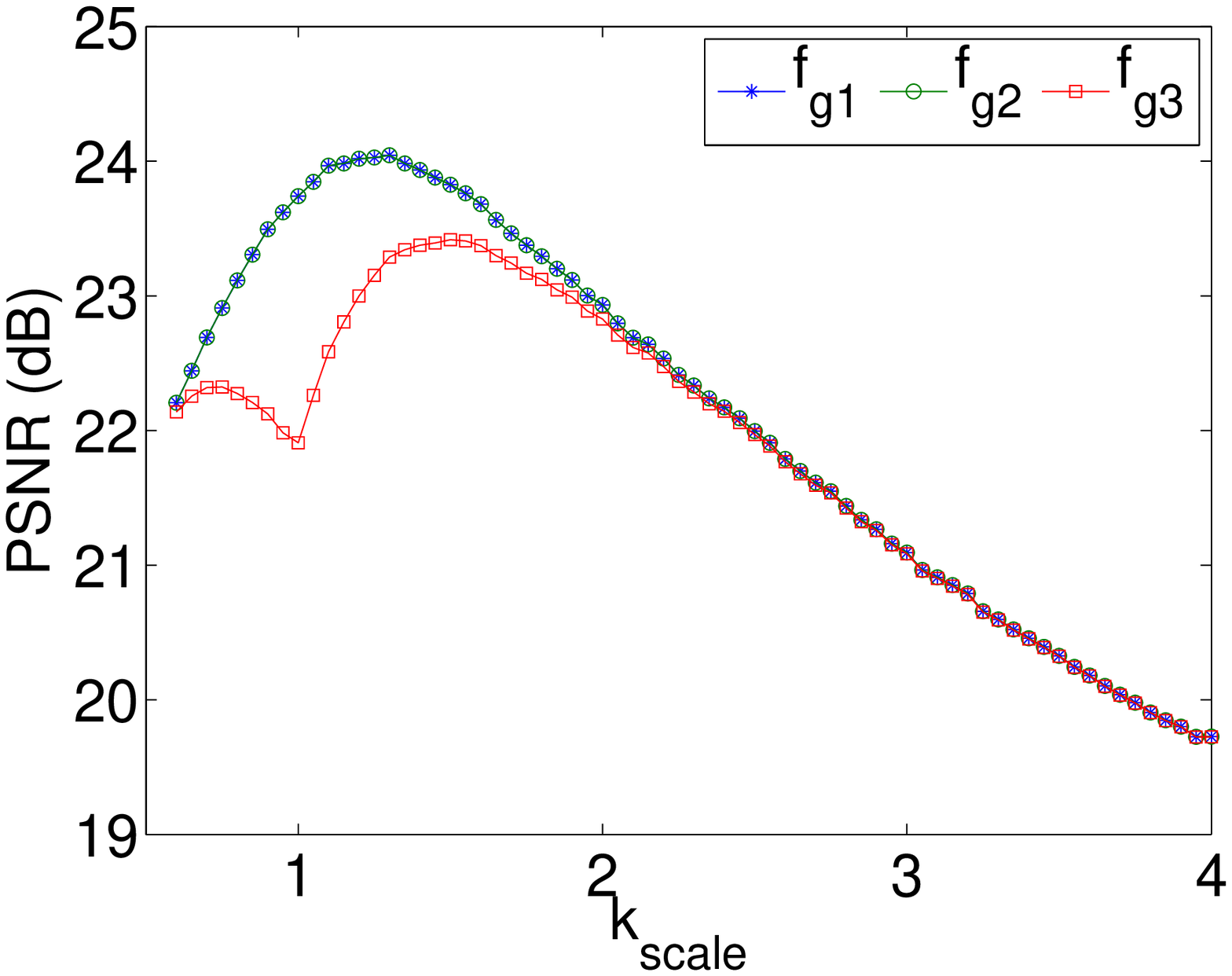}%{figs/sim/schedule4/gs_iter1.pdf}
                \caption{$MaxIter=1$}
        \end{subfigure}
        \qquad\qquad
        \begin{subfigure}[b]{0.15\textwidth}\center
                \includegraphics[height=.125\textheight]{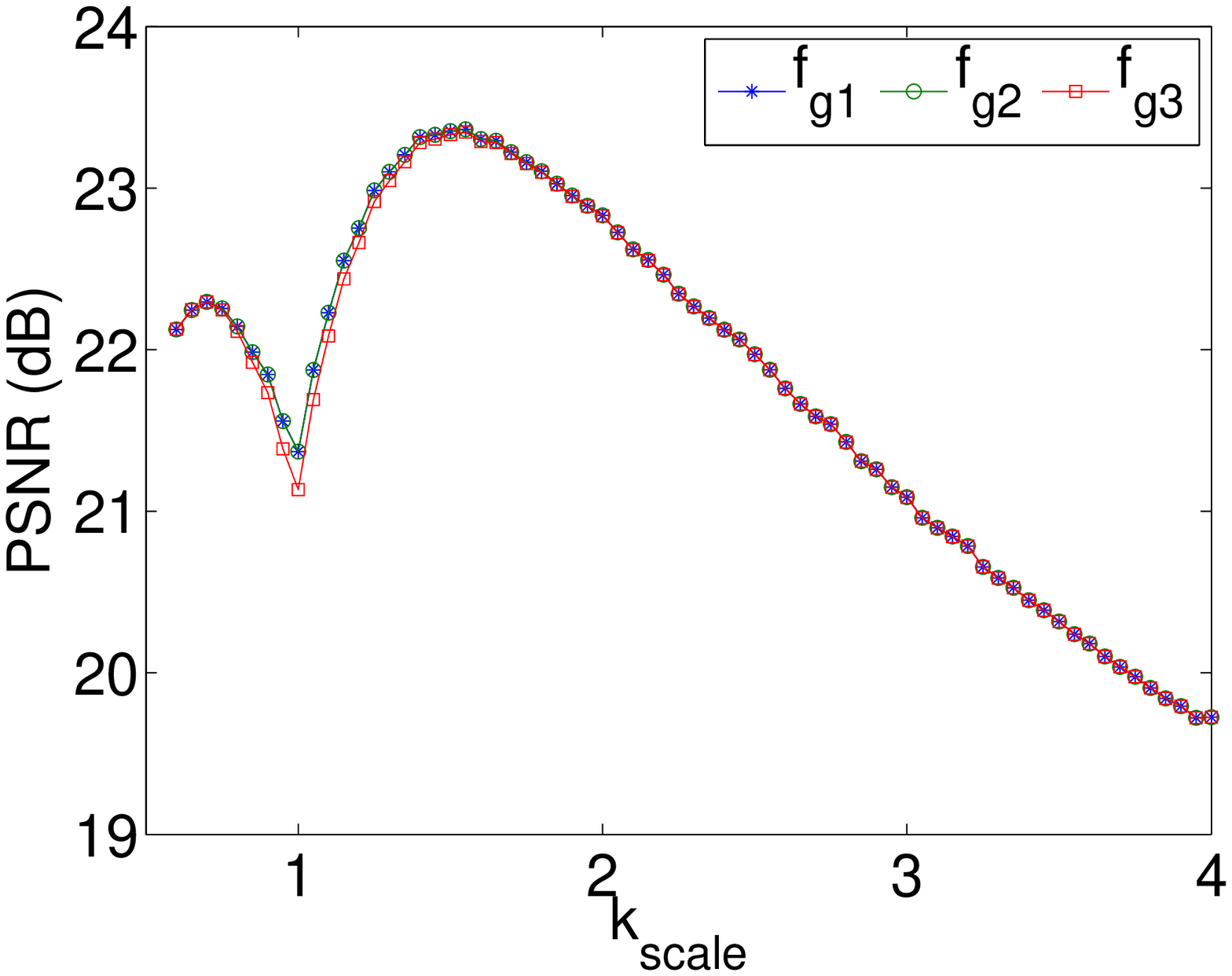}%{figs/sim/schedule4/gs_iter2.pdf}
                \caption{$MaxIter=2$}
        \end{subfigure}
\caption{Performance of different implementations of $\hat{\fv}_{g}$.}
\label{fig:e5}
\vspace{-1\baselineskip}
\end{figure}
Since $\hat{\fv}_g$ has three implementations, $\hat{\fv}_\alpha$ can also have different corresponding implementations, given by $\hat{\fv}_{\alpha i}=\alpha \hat{\fv}_c + (1-\alpha) \hat{\fv}_{gi}$ with $i=1,2,3$. The performance of all the reconstruction methods with different implementations is shown in Fig. \ref{fig:e6}. All the algorithms are configured to use $MaxIter$ number of CG iterations, except $\hat{\fv}_m$ since it does not need least squares. The reconstruction $\hat{\fv}_{\alpha 2}$ is omitted as it is always equal to $\hat{\fv}_{\alpha 1}$. 
We observe that $\hat{\fv}_{\alpha1} = \hat{\fv}_{\alpha2}$ performs better than $\hat{\fv}_{\alpha3}$. In case of heavier oversampling, $\hat{\fv}_\alpha$ is more favorable than~$\hat{\fv}_r$. 
\begin{figure}[h]
\vspace{0\baselineskip}
        \begin{subfigure}[b]{0.15\textwidth}\center
                \includegraphics[height=.125\textheight]{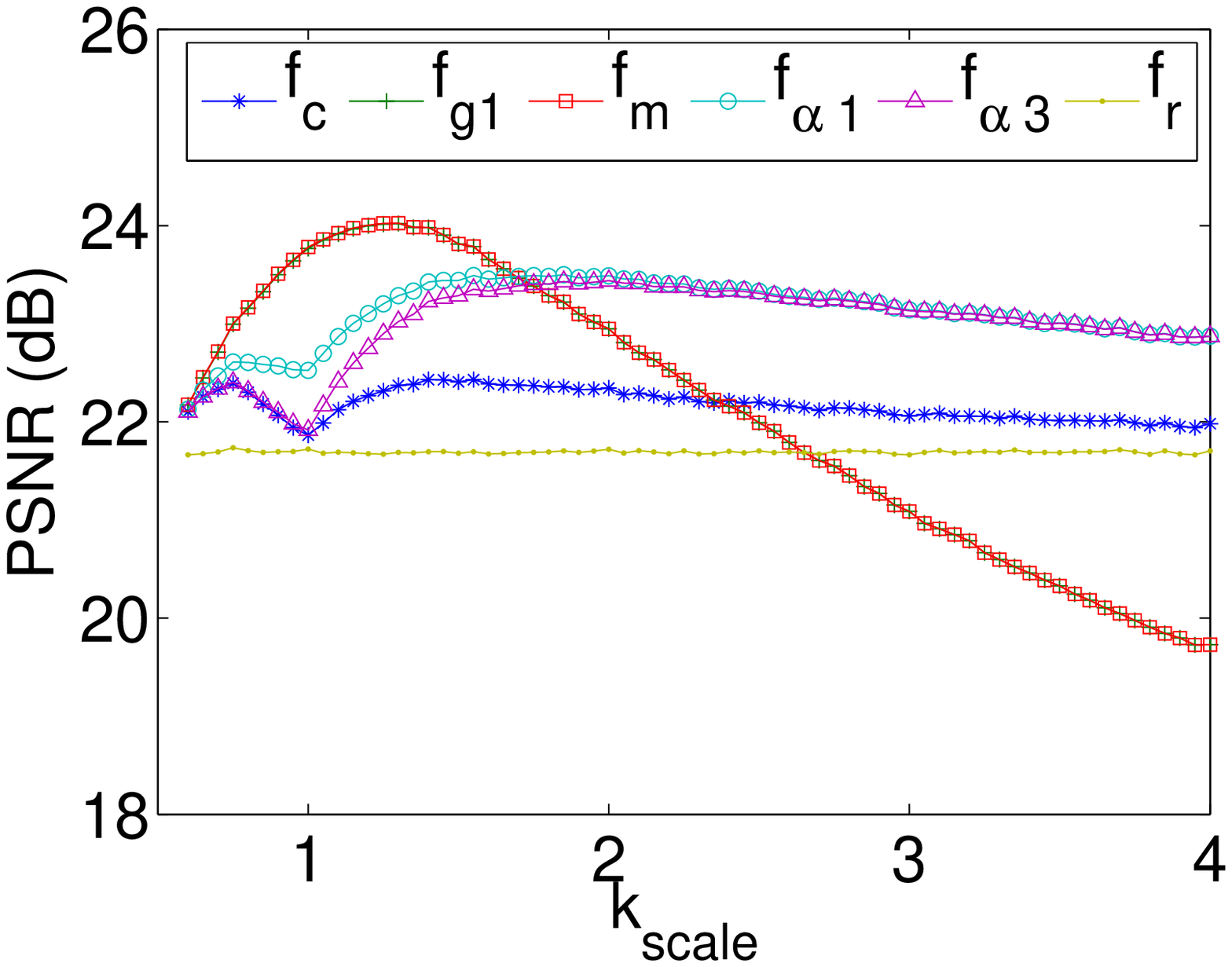}%{figs/sim/schedule4/two_alpha_iter1.pdf}
                \caption{$MaxIter=1$}
        \end{subfigure}
        \qquad\qquad
        \begin{subfigure}[b]{0.15\textwidth}\center
                \includegraphics[height=.125\textheight]{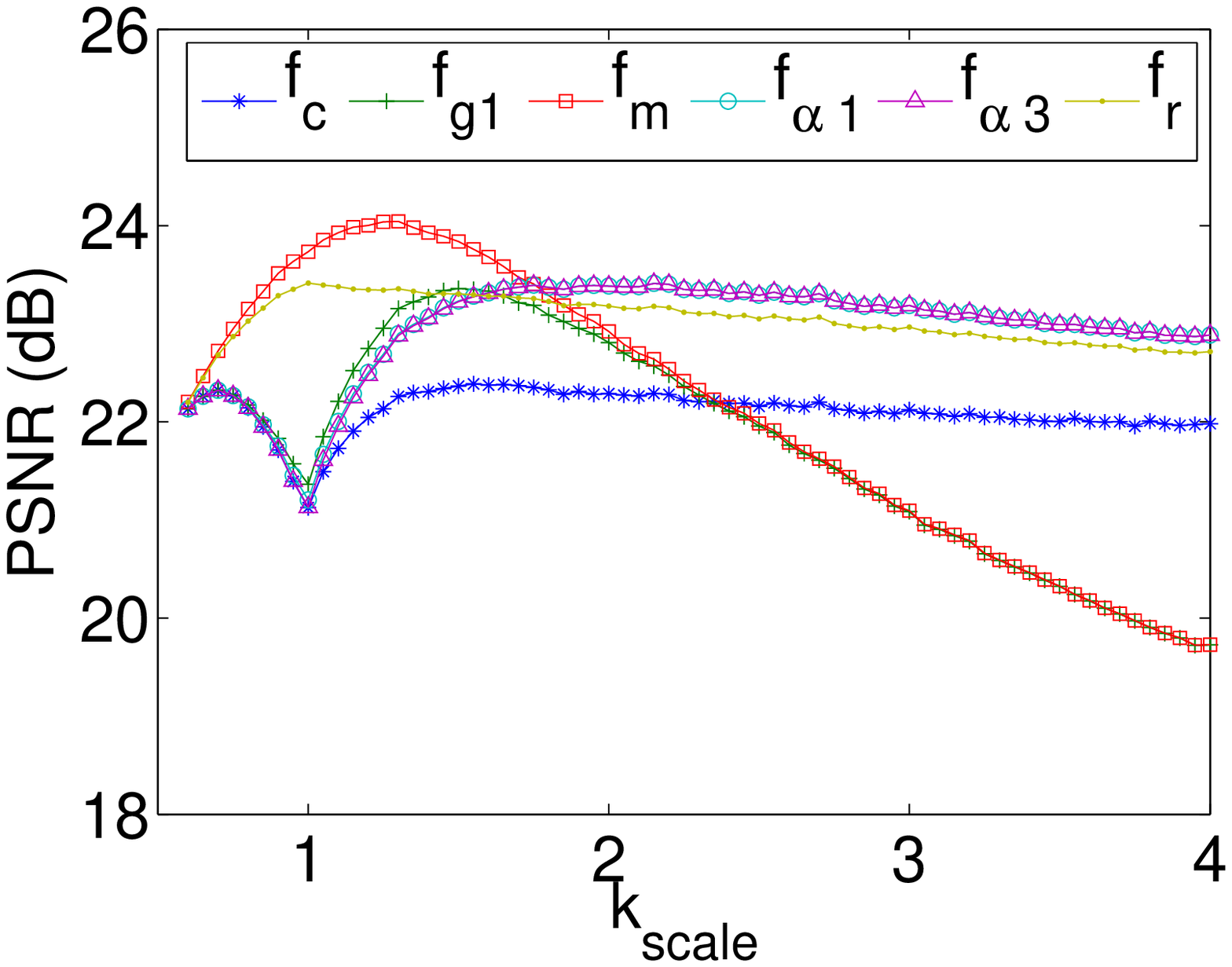}%{figs/sim/schedule4/two_alpha_iter2.pdf}
                \caption{$MaxIter=2$}
        \end{subfigure}
\caption{Reconstructed image qualities}
\label{fig:e6}
\vspace{-1\baselineskip}
\end{figure}

%\vspace{-0.25\baselineskip}
\section*{Conclusion}
%\vspace{-0.25\baselineskip}
The proposed frame-less iterative algorithm allows efficiently reconstructing noisy signals with desired properties described by a guiding subspace, 
which can be given by an approximate projector. 
Numerical examples for image magnification demonstrate the advantages of our method, compared to the traditional  
sample consistent and pure guided methods.
The suggested methodology is expected to be effective for a wide range of signal reconstruction applications, in video and speech processing, and machine learning.  

\newpage
\IEEEtriggeratref{7}
\bibliographystyle{IEEEtran}
\bibliography{refs}

%$^*$A preliminary version of this work has been posted on arxiv.org

\end{document}